\theoremstyle{amsart}
\newfont{\fnt}{cmsy10}
\newfont{\sss}{cmr10}
\newfont{\azb}{wncyr10}
\newfont{\azbit}{wncyi10}
\theoremstyle{definition}
\theoremstyle{plain}
\newtheorem{vt}{Theorem}
\newtheorem{prp}{Proposition}
\theoremstyle{definition}
\begin{document}
\title[Complete list of conservation laws for non-integrable compacton equations
]{
{\protect\vspace*{-2.3cm}}A complete list of conservation laws for non-integrable compacton equations
of $K(m,m)$ type}
\author{Ji\v{r}ina Vodov\'a}
\keywords{Evolution equations, symmetries, conservation laws, Hamiltonian operators.}
\subjclass[2010]{37K05, 37K10}
\address{Mathematical Institute, Silesian University in Opava, Na Rybn\'{i}\v{c}ku \nolinebreak 1, 746 01 Opava, Czech Republic}
\email{Jirina.Vodova@math.slu.cz}
\maketitle
\begin{abstract} {\protect\vspace*{-0.7cm}}
In 1993, P. Rosenau and J. M. Hyman introduced and studied Korteweg-de-Vries-like equations with nonlinear dispersion admitting compacton solutions, $u_t+D_x^3(u^n)+D_x(u^m)=0$, $m,n>1$, which are known as the $K(m,n)$ equations.
In the present paper we consider a slightly generalized version of the $K(m,n)$ equations for $m=n$, namely,
$u_t=aD_x^3(u^m)+bD_x(u^m)$, where $m,a,b$ are arbitrary real numbers. We describe all generalized symmetries and conservation laws thereof
for $m\neq -2,-1/2,0,1$; for these four exceptional values of $m$ the equation in question is either completely integrable ($m=-2,-1/2$) or linear ($m=0, 1$). It turns out that for $m\neq -2,-1/2,0,1$ there are only three symmetries corresponding to $x$- and $t$-translations and scaling of $t$ and $u$,
and four nontrivial conservation laws, one of which expresses the conservation of energy, and the other three are associated with the Casimir functionals of the Hamiltonian operator $\mathfrak{D}=aD_x^3+bD_x$ admitted by our equation. Our result provides {\em inter alia} a rigorous proof of the fact that the $K(2,2)$ equation has just four conservation laws found by P. Rosenau and J. M. Hyman.\looseness=-1

\looseness=-1
\end{abstract}
\section*{Introduction}
The equations possessing soliton solutions with compact support (the so-called \textit{compactons})
are presently of great interest for both mathematicians and physicists, see e.g.\ \cite{OR,P+R,  Rosenau, Rosenau2, compactons},
because such equations can provide adequate models for natural phenomena with a finite span.
Initially compactons emerged as solutions of fully nonlinear Korteweg-de-Vries-like equations (the $K(m,n)$ equations):
$$u_t+D_x^3(u^n)+D_x(u^m)=0,$$
which have first appeared in \cite{compactons}; here $D_x$ denotes the total $x$-derivative
\[
D_x=\displaystyle\frac{\partial}{\partial x}+\sum\limits_{j=0}^\infty u_{j+1}\frac{\partial}{\partial u_j},
\]
$m,n>1$, $t$ is the time and $x$ is the space variable, $u_j$ denotes $j$th derivative of $u$ with respect to $x$, $u_0\equiv u$, see e.g.\ \cite{wh_integrability, Olver, symmetries} for further details on this notation.

Although the solitary waves have compact support only if $n>1$ and a compacton is a solution for a $K(m,n)$ equation in the classical sense only for $n\leq 3$ \cite{compactons}, it is natural to study a slightly more general version of these equations which we hereinafter refer to as {\em generalized $K(m,n)$ equations}:
\begin{equation}\label{gkmn}
u_t=aD_x^3(u^n)+bD_x(u^m),
\end{equation}
where $a,b,m,n$ are arbitrary real numbers.

If $m=n$, these equations are easily seen to be Hamiltonian with respect to the Hamiltonian operator $\mathfrak{D}=aD_x^3+bD_x$, the Hamiltonian functional being $\mathcal{H}=\int\int u^m d u d x$. Thus,  equation (\ref{gkmn}) for $m=n$ can be written as
\begin{equation}\label{gkmm}
u_t=aD_x^3(u^m)+bD_x(u^m)=\mathfrak{D}\delta \mathcal{H},
\end{equation}
where $\delta$ denotes the variational derivative of a functional with respect to $u$ and
$\int dx$ is understood as a formal integral in the sense of calculus of variations, see e.g.\ \cite{dickey, Olver} for details.
Recall (see e.g.\ \cite{Olver}) that for any functional $\mathcal{F}=\int f(x,t,u,\dots,u_s) dx$ with a smooth density $f$ we have
\[
\delta\mathcal{F}=\displaystyle\frac{\delta f}{\delta u}\,,\ \ \mbox{\rm where}\ \
\displaystyle\frac{\delta}{\delta u}=\sum\limits_{j=0}^\infty (-D_x)^j \displaystyle\frac{\partial}{\partial u_j}.
\]

The pseudo-differential operator $\mathfrak{D}^{-1}$ is easily seen to be a formal conservation law of rank $\infty$ for (\ref{gkmm}) in the sense of \cite{Olver}. This means \cite{wh_integrability, Olver}
that an infinite set of ``standard" obstacles for existence of infinitely many conservation laws of increasing order for (\ref{gkmm}) vanishes,
and therefore one could expect that this equation should share at least some of the properties of integrable PDEs.
However, from the results of \cite{wh_integrability} it can be inferred that only the  equations
 $K(-2,-2)$ and $K(-\frac{1}{2},-\frac{1}{2})$
are integrable (cf.\ \cite{Rosenau} for an alternative argument and Proposition~\ref{prop1} below).

Thus, there are no other nonlinear symmetry integrable equations among those of the generalized $K(m,m)$ type, i.e., of the form (\ref{gkmm}).
Furthermore, we were able to obtain a complete
description of generalized symmetries, including those explicitly dependent on $t$ and $x$, for (\ref{gkmm}) with $m\neq -2,-1/2,0,1$, see Proposition~\ref{prop2} below and subsequent discussion.

Knowing these in conjunction with the Hamiltonian operator $\mathfrak{D}$ has allowed us to provide in Theorem~\ref{2} below a complete description of the conservation laws for (\ref{gkmm}) with $m\neq -2,-1/2,0,1$. It turns out that, apart from the energy, all conserved functionals for our generalized $K(m,m)$ equation are Casimir functionals for the Hamiltonian operator $\mathfrak{D}$, and therefore they are \cite{Olver} conserved functionals for any evolution equation of the form
$u_t=\mathfrak{D}f$, where $f$ is an arbitrary smooth 
function of $x,u$ and a finite number of $u_j$.

Note that in \cite{compactons} four conservation laws for the $K(2,2)$ equation were found and it was claimed that no other exist. To the best of our knowledge this claim hasn't yet been proved. Our results provide {\em inter alia} a rigorous proof of this assertion.

Recall that the conservation laws could be employed e.g.\ for the study of stability and for the proof of existence and uniqueness of the solution which belongs to a given function space. Knowing a complete list of conservation laws for a given equation is required, for instance, in order to find
normal forms thereof with respect to low-order conservation laws \cite{ps}, and
for constructing higher-precision discretizations, because it is desirable that the latter preserve all conservation laws of the equation in question, cf.\ e.g.\ \cite{lw} and references therein.

\section{Preliminaries}

Recall (cf.\ e.g.\ \cite{vin, wh_integrability, Olver}) that a (smooth)
function $f$ is called {\em local} if it depends only on $x,t,u$ and a finite number of $u_j$.

Consider an evolution equation with the local right-hand side of the form
\begin{equation}\label{ee}
u_t=F(x, u,u_1,\dots,u_k),\quad k\geq 2.
\end{equation}

A local function $G=G(x,t,u,u_1,\dots,u_s)$
is called (see e.g.\ Chapter 5 of \cite{Olver} for details)
a {\em characteristic of generalized symmetry} for (\ref{ee}) if it satisfies the linearized version of the latter, that is,
\[
D_t(G)=\mathrm{D}_F(G),
\]
where $D_t=\partial/\partial t+ \sum\limits_{j=0}^\infty D_x^j (F)\partial/\partial u_j$ is the total $t$-derivative,
and $\mathrm{D}_F=\sum\limits_{j=0}^k \partial F/\partial u_j D_x^j$.

Next, a {\em formal symmetry of order} $q$ for (\ref{ee}) is \cite{wh_integrability, symmetries} a formal series of the form
\[
\mathfrak{L}=\sum\limits_{j=-\infty}^1 a_j D_x^j,
\]
where $a_j$ are local functions,
such that
\[\mathrm{deg}(
D_t (\mathfrak{L})-[\mathrm{D}_F,\mathfrak{L}]
)\leq k+1-q.\]
Here the symbol $\mathrm{deg}$ stands for the degree of a formal series; recall that
for $\mathfrak{M}=\sum\limits_{j=-\infty}^m b_j D_x^j$ with $b_m\neq 0$
we have $\mathrm{deg}\, \mathfrak{M}=m$ by definition, cf.\ e.g.\ \cite{wh_integrability, symmetries}.

Finally, (\ref{ee})  is called \textit{symmetry integrable} \cite{wh_integrability, Olver, symmetries}
if it admits an infinite sequence of explicitly time-independent generalized symmetries of increasing order.

To prove the claims made in Introduction, we shall first need the following result based on the symmetry approach to integrability, see e.g.\ \cite{symmetries,wh_integrability,complete_lists,test} for details:
\begin{vt}[\cite{wh_integrability,complete_lists}]\label{1}
An equation (\ref{ee}) possesses an explicitly time-independent formal symmetry of order $N>k$
if and only if the first $N-k$ canonical densities $\rho_i$, $i=-1,0,1,2,\dots,N-k-2$,
are densities of local conservation laws.

Existence of an explicitly time-independent formal symmetry of order $q>N$ is a necessary condition for (\ref{ee}) to possess explicitly time-independent generalized
symmetries with the characteristic of order $q$.
\end{vt}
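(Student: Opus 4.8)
The plan is to work in the associative algebra of formal pseudodifferential series $\sum_{j\le m}b_jD_x^j$ with local coefficients, equipped with the residue functional $\mathrm{res}\bigl(\sum b_jD_x^j\bigr)=b_{-1}$. The cornerstone is the Adler residue identity: for any two such series $A$ and $B$ one has $\mathrm{res}[A,B]=D_x(\sigma)$ for some local $\sigma$, so that the residue of a commutator is always a total $x$-derivative. I would take the canonical densities $\rho_i$ to be the standard local functions recursively determined by $\mathrm{D}_F$ as in \cite{wh_integrability,complete_lists}; the feature used throughout is that they are exactly the successive obstructions that arise when one tries to solve the formal-symmetry equation degree by degree.

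For the ``only if'' part, suppose $\mathfrak{L}=\sum_{j\le 1}a_jD_x^j$ is a formal symmetry of order $N$, so that $R:=D_t(\mathfrak{L})-[\mathrm{D}_F,\mathfrak{L}]$ has $\deg R\le k+1-N$. I would apply $\mathrm{res}$ to the relations satisfied by the powers $\mathfrak{L}^n$, whose residues, together with a logarithmic residue accounting for $\rho_{-1}$, reproduce the canonical densities up to total $x$-derivatives. Since $D_t$ and $[\mathrm{D}_F,\cdot]$ are derivations, the induced remainder is $\sum_{p+q=n-1}\mathfrak{L}^pR\mathfrak{L}^q$, of degree at most $n+k-N$; for $n\le N-k-2$ this is $\le -2$ and hence does not contribute to $\mathrm{res}$, while $\mathrm{res}[\mathrm{D}_F,\mathfrak{L}^n]$ is a total $x$-derivative by the Adler identity. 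What survives is $D_t(\rho_i)=D_x(\cdot)$ for $i=-1,\dots,N-k-2$, i.e. these $N-k$ canonical densities are conserved.

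The ``if'' part is the technical heart, and I expect the recursive solvability analysis to be the \emph{main obstacle}. Here one builds $\mathfrak{L}$ by fixing $a_1,a_0,a_{-1},\dots$ so that $D_t(\mathfrak{L})-[\mathrm{D}_F,\mathfrak{L}]$ vanishes in each successive degree. The top-degree equation determines $a_1$ algebraically from the leading coefficient of $\mathrm{D}_F$; every subsequent degree produces an equation in which the next unknown coefficient enters only through $D_x$, so that solving for it amounts to inverting $D_x$ on a specific local function. The obstruction to performing this local inversion at the step that produces $\rho_i$ is precisely that $D_t(\rho_i)$ be a total $x$-derivative, i.e. that $\rho_i$ be a conserved density; hence conservation of the first $N-k$ canonical densities is exactly what lets the recursion run until the remainder has degree $\le k+1-N$. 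The delicate point I would spend most care on is verifying, at each order, that the solvability condition (that a certain local function lie in the image of $D_x$, equivalently that its variational derivative vanish) coincides on the nose with the conservation of the corresponding $\rho_i$; this matching needs careful degree bookkeeping and repeated use of the Adler identity.

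For the second assertion I would begin with a genuine explicitly time-independent symmetry of order $q$, with characteristic $G$, and linearize the symmetry condition $D_t(G)=\mathrm{D}_F(G)$. This yields an operator identity of the form $D_t(\mathrm{D}_G)-[\mathrm{D}_F,\mathrm{D}_G]=P$, where $\mathrm{D}_G$ has order $q$ and $P$ is an explicit correction of order only $k$, coming from the variation of the coefficients of $\mathrm{D}_F$ along the symmetry. Since $G$ has genuine order $q$, the leading coefficient of $\mathrm{D}_G$ is invertible, so the formal $q$-th root $\mathfrak{L}=\mathrm{D}_G^{1/q}$ is a well-defined series of degree $1$; matching degrees in $P=\sum_{p+q'=q-1}\mathfrak{L}^p\bigl(D_t(\mathfrak{L})-[\mathrm{D}_F,\mathfrak{L}]\bigr)\mathfrak{L}^{q'}$ forces $\deg\bigl(D_t(\mathfrak{L})-[\mathrm{D}_F,\mathfrak{L}]\bigr)=k+1-q$, so $\mathfrak{L}$ is a formal symmetry of order exactly $q$. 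Thus a genuine symmetry of order $q$ forces a formal symmetry of the same order; read contrapositively, the failure of some canonical density to be conserved caps the order of any admissible formal symmetry and hence, by this step, the order of any genuine symmetry, which is exactly the mechanism later used to bound the symmetries of \eqref{gkmm}.
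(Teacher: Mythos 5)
The first thing to note is that the paper does not prove this statement at all: Theorem~\ref{1} is quoted, with attribution to \cite{wh_integrability,complete_lists}, as an imported tool, and the paper's own contribution begins only with Proposition~\ref{prop1}. So there is no proof in the paper to compare yours against; what you have written is a reconstruction of the standard symmetry-approach argument of Mikhailov, Shabat and Sokolov, and it has to be judged against that literature.

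As such a reconstruction, your outline is structurally faithful: the Adler identity $\mathrm{res}[A,B]=D_x(\sigma)$ plus degree counting for the ``only if'' direction, degree-by-degree solution of the formal-symmetry equation for the ``if'' direction, and the Fr\'echet-derivative-plus-$q$-th-root argument for the second assertion (your operator identity $D_t(\mathrm{D}_G)-[\mathrm{D}_F,\mathrm{D}_G]=P$ with $\deg P\le k$, and the factorization $\sum_{p+r=q-1}\mathfrak{L}^p(\cdots)\mathfrak{L}^r$ with non-cancelling leading terms, are exactly the standard mechanism). But two genuine gaps remain. First, in the ``only if'' direction you take residues of powers of the given formal symmetry $\mathfrak{L}$, whereas the canonical densities are defined from $F$ itself (e.g.\ via the formal root $\mathrm{D}_F^{1/k}$, cf.\ (\ref{ccd})); identifying $\mathrm{res}(\mathfrak{L}^n)$ with $\rho_n$ modulo total $x$-derivatives requires showing that $\mathfrak{L}$ agrees, to the relevant depth, with a series in powers of $\mathrm{D}_F^{1/k}$ --- a lemma you never state, and without which ``reproduce the canonical densities up to total $x$-derivatives'' is an assertion, not a step. (A minor slip in the same place: the logarithmic residue accounts for $\rho_0$, not $\rho_{-1}$; the latter is the residue of the inverse root.) Second, and more seriously, the ``if'' direction --- which you yourself identify as the technical heart --- is only described, not executed: the claim that the solvability condition at each order of the recursion (a local function lying in the image of $D_x$) coincides exactly with conservation of the corresponding $\rho_i$ \emph{is} the content of the theorem, so deferring it as ``the delicate point I would spend most care on'' leaves the proof incomplete precisely where it is hardest. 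If your intention is to use the theorem the way the paper does, citing \cite{wh_integrability,complete_lists} suffices; if your intention is to prove it, these two steps are where essentially all the work lies.
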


Hence, existence of an explicitly
time-independent formal symmetry of infinite order is a necessary condition for (\ref{ee})
to be symmetry integrable.

As for the canonical densities for (\ref{ee}), these can be computed recursively from a formal symmetry $\mathfrak{L}$
of sufficiently high order; explicit formulas for a few of them can be found
in \cite{symmetries,wh_integrability,complete_lists,test}. For instance, for $k>2$ we have
\begin{equation}\label{ccd}
\rho_{-1}=(\partial F/\partial u_k)^{-1/k}\ \mbox{and} \ \rho_0=(\partial F/\partial u_{k-1})/(\partial F/\partial u_k).
\end{equation}

\section{Main results}

\begin{prp}\label{prop1}
 If $m\neq -2,-1/2,0,1$, then the corresponding generalized $K(m,m)$ equation (\ref{gkmm}) has
 no {\em explicitly time-independent} generalized symmetries of order greater than 3; in particular, equation (\ref{gkmm})
 is not symmetry integrable.
\end{prp}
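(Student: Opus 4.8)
The plan is to apply Theorem~\ref{1} to the equation \eqref{gkmm}. Here the right-hand side is $F=aD_x^3(u^m)+bD_x(u^m)$, which after expansion is a polynomial in $u_1,u_2,u_3$ with coefficients depending on $u$; the top-order term is $F=am u^{m-1}u_3+\dots$, so the order is $k=3$ and $\partial F/\partial u_3=am u^{m-1}$. To rule out explicitly time-independent generalized symmetries of order $q>3$, it suffices by the second part of Theorem~\ref{1} to show that \eqref{gkmm} does not admit an explicitly time-independent formal symmetry of sufficiently high order $N$; and by the first part, this reduces to showing that the canonical densities $\rho_{-1},\rho_0,\rho_1,\dots$ fail to be densities of local conservation laws beyond some point. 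So the whole proposition hinges on computing the first few canonical densities and detecting an obstruction.

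**Computing the canonical densities.**

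First I would compute $\rho_{-1}$ and $\rho_0$ from the explicit formulas~\eqref{ccd}. With $\partial F/\partial u_3=am u^{m-1}$ we get $\rho_{-1}=(am u^{m-1})^{-1/3}$, a pure function of $u$ (up to a constant factor), which is trivially a total $x$-derivative of nothing useful but more importantly is manifestly a density whose conserved-law status I must test via $D_t(\rho_{-1})\in\operatorname{Im}D_x$, i.e.\ $\delta(D_t\rho_{-1})/\delta u=0$. Since $\rho_{-1}$ depends on $u$ alone, this first test is typically passed, reflecting the fact that $\mathfrak{D}^{-1}$ is a formal conservation law of rank $\infty$ as noted in the Introduction. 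Similarly $\rho_0$ is computed from $\partial F/\partial u_2$ and $\partial F/\partial u_3$. The crux is to push the recursion far enough: I would compute $\rho_1$ (and if necessary $\rho_2$) explicitly, each being a local function of $u,u_1,u_2,\dots$, and then impose the conservation condition $\delta(D_t\rho_i)/\delta u=0$.

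**The main obstacle and where $m$ enters.**

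The hard part will be the bookkeeping in computing $\rho_1,\rho_2$ from a formal symmetry $\mathfrak{L}=\sum_{j\le1}a_jD_x^j$: one must solve the defining relation $\deg\bigl(D_t(\mathfrak{L})-[\mathrm{D}_F,\mathfrak{L}]\bigr)\le k+1-q$ order by order for the coefficients $a_j$, which are rational in $u$ and polynomial in the higher $u_j$, and then assemble the canonical densities. I expect the decisive obstruction to appear at $\rho_1$ (or at worst $\rho_2$): applying the Euler operator $\delta/\delta u$ to $D_t\rho_i$ yields an expression that is a polynomial in $u_1,u_2,\dots$ whose coefficients are functions of $u$ and of $m$. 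Requiring this expression to vanish identically forces a set of algebraic conditions on $m$. I anticipate that these conditions factor into a product of the form $m(m-1)(m+2)(2m+1)=0$ (up to nonzero numerical factors), so that the conservation condition holds only for the four exceptional values $m=-2,-1/2,0,1$, and fails for every other $m$. Thus for $m\neq-2,-1/2,0,1$ some canonical density is not conserved, so no formal symmetry of high enough order exists, and by Theorem~\ref{1} there are no explicitly time-independent generalized symmetries of order greater than $3$; in particular \eqref{gkmm} is not symmetry integrable. The one point demanding care is to verify that the chosen $N$ is large enough that the failing density $\rho_i$ genuinely lies within the range $i\le N-k-2$ guaranteed by Theorem~\ref{1}, but since we only need to exclude symmetries of \emph{every} order $q>3$, taking $N$ as large as needed and locating the first non-conserved $\rho_i$ suffices.
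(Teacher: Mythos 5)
Your overall framework (Theorem~\ref{1} plus testing whether the canonical densities are conserved) is the same as the paper's, but your proposal contains a concrete error that derails it: you assert that the test on $\rho_{-1}$ is ``typically passed'' because $\rho_{-1}=(amu^{m-1})^{-1/3}$ depends on $u$ alone, and you therefore defer the decisive obstruction to $\rho_1$ or $\rho_2$. This is exactly backwards: the paper's entire proof consists of checking that $\rho_{-1}$ itself fails the test. Whether a density is conserved is a property of the dynamics, not of the form of the density. Writing $\rho_{-1}=c\,u^{(1-m)/3}$ with $c=(am)^{-1/3}$, one has $D_t(\rho_{-1})=c\,\frac{1-m}{3}\,u^{(1-m)/3-1}\bigl(aD_x^3(u^m)+bD_x(u^m)\bigr)$; the $b$-term is always a total $x$-derivative, but applying $\delta/\delta u$ to the $a$-term one finds, for instance, that the coefficient of the monomial $u_1u_2$ (times a power of $u$) is a nonzero constant multiple of $a\,m(m-1)(m+2)(2m+1)$. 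Hence $\delta(D_t\rho_{-1})/\delta u=0$ holds precisely for $m=-2,-1/2,0,1$ and fails for every other $m$. So the factorization you anticipated does occur, but already at $\rho_{-1}$: Theorem~\ref{1} with $N=4$ then immediately excludes explicitly time-independent formal symmetries of order greater than $3$, and the proof closes without ever touching $\rho_0$, $\rho_1$, $\rho_2$ or the formal-symmetry recursion whose bookkeeping you identified as the main difficulty.

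The two heuristics that led you astray are both non sequiturs and are worth flagging. First, ``depends on $u$ alone'' does not make a density conserved; conservation must be checked against the equation, and here it is precisely this check that detects non-integrability. Second, the fact that $\mathfrak{D}^{-1}$ is a formal conservation law of rank $\infty$ concerns a \emph{different} set of obstructions (those attached to formal conservation laws), and says nothing about whether the canonical densities attached to formal \emph{symmetries} are conserved; indeed, the Introduction of the paper makes exactly this point --- those ``standard'' obstacles all vanish, and yet the equation is non-integrable because $\rho_{-1}$ is not a conserved density. Finally, note that as written your proposal never actually performs any conservation test: it only predicts where one would fail, and predicts the wrong place, so even on its own terms it is a plan rather than a proof.
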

\begin{proof}
Applying (\ref{ccd}) 
to (\ref{gkmm}) we obtain that $\rho_{-1}= (a m u^{m-1})^{-1/3}$.
Computing the quantity $\delta D_t(\rho_{-1})/\delta u$ reveals that it is identically equal to zero only for these values of $m$, i.e., $m=-2,-1/2,0,1$. Hence (cf.\ e.g.\ \cite{Olver, symmetries})
our $\rho_{-1}$ is a density for a conservation law of our equation only for $m=-2,-1/2,0,1$. By Theorem~\ref{1}, our equation (\ref{gkmm}) for $m\neq -2,-1/2,0,1$ cannot have an explicitly time-independent formal symmetry of order greater than 3, and therefore it cannot possess any explicitly time-independent generalized symmetries of order greater than 3 and, in particular, it cannot be symmetry integrable.\looseness=-1
\end{proof}

Proposition~\ref{prop1} ensures non-existence of {\em explicitly time-independent} generalized symmetries of order greater than 3.
However, this result can be further generalized to explicitly time-dependent symmetries:
\begin{prp}\label{prop2}
 If $m\neq -2,-1/2,0,1$, then the corresponding generalized $K(m,m)$ equation (\ref{gkmm}) has
 no generalized symmetries, including explicitly time-dependent ones, of order greater than 3.

The only generalized symmetries of (\ref{gkmm}) for $m\neq -2,-1/2,0,1$ are those with the characteristics
 $Q_1=u_x$, $Q_2=u_t$ and $Q_3=(m-1)t u_t+u$, i.e., $x$- and $t$-translations and the scaling symmetry.
\end{prp}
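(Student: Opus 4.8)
The plan is to upgrade Proposition~\ref{prop1} from explicitly time-independent to arbitrary (possibly explicitly $t$-dependent) symmetries, and then to pin down the finitely many surviving low-order characteristics explicitly. First I would take a generalized symmetry with characteristic $G=G(x,t,u,u_1,\dots,u_s)$ of order $s$ and study how the explicit $t$-dependence is constrained. The defining equation is $D_t(G)=\mathrm{D}_F(G)$ with $F=aD_x^3(u^m)+bD_x(u^m)$. The key observation is that if $G$ has order $s>3$, then, repeatedly differentiating the symmetry condition with respect to $t$ (or equivalently considering $\partial G/\partial t$, which is itself forced to satisfy a related linear equation), one can show that any explicit $t$-dependence must collapse: the ``top-order'' part of the symmetry equation, i.e.\ the coefficient of the highest derivative $u_{s+2}$ appearing in $D_t(G)=\mathrm{D}_F(G)$, involves only the $u$-derivatives of the leading coefficient of $G$ and not $\partial/\partial t$, so time cannot enter at leading order. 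More precisely, I would argue that for $s>3$ the explicitly time-dependent part of $G$ would itself generate, via the commutator structure, an explicitly time-independent generalized symmetry of the same order $s$, contradicting Proposition~\ref{prop1}. This reduces the problem to $s\leq 3$.

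The mechanism I expect to make this rigorous is the following. Writing $G=\sum_{i} t^i G_i$ as a polynomial in $t$ (the symmetry condition is polynomial in $t$ once $F$ is $t$-independent, so any $t$-dependence of a generalized symmetry is necessarily polynomial), and substituting into $D_t(G)=\mathrm{D}_F(G)$, I would collect powers of $t$. The top power yields $\mathrm{D}_F(G_{\mathrm{top}})-D_t^{0}(G_{\mathrm{top}})=0$ where $D_t^{0}$ is the ``evolutionary'' total derivative along the flow (the part of $D_t$ without $\partial/\partial t$); that is, $G_{\mathrm{top}}$ is an explicitly time-independent symmetry. By Proposition~\ref{prop1} it must have order $\leq 3$. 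Descending through the lower powers of $t$, each $G_i$ satisfies an inhomogeneous version of the symmetry equation whose inhomogeneity is built from the already-controlled $G_{i+1}$; an order count on the leading term shows each $G_i$ again has order $\leq 3$, so $\mathrm{ord}\,G\leq 3$ overall.

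Once the order is bounded by $3$, the remaining task is a finite, explicit computation: I would write the most general candidate $G=G(x,t,u,u_1,u_2,u_3)$, substitute into $D_t(G)=\mathrm{D}_F(G)$, and solve the resulting determining equations by matching coefficients of the independent jet variables $u_4,u_5,\dots$ (which appear on the left through $D_t$ applied to $u_1,u_2,u_3$). This forces $G$ to be affine in $u_3$ with tightly constrained coefficients, and after integrating the resulting overdetermined linear system one recovers exactly the span of $Q_1=u_x$, $Q_2=u_t$ and $Q_3=(m-1)tu_t+u$, with the exclusion of the exceptional $m$ entering precisely where the integrability of the determining system would otherwise admit extra solutions. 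The main obstacle, I expect, is the first part: making the ``time-dependence collapses'' argument airtight, i.e.\ proving cleanly that the polynomial-in-$t$ decomposition really does reduce every explicitly $t$-dependent symmetry of order $>3$ to a time-independent one of the same order so that Proposition~\ref{prop1} applies. The order-$\leq 3$ classification itself is routine but lengthy, and I would relegate its detailed coefficient-matching to a direct (possibly computer-algebra-assisted) verification.
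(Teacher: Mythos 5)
Your reduction from arbitrary time dependence to order $\le 3$ contains two genuine gaps, both located exactly at the step you yourself flag as the main obstacle; the paper avoids both by invoking Theorems~1 and~2 of \cite{as}, which extend the canonical-density obstruction of Theorem~\ref{1} directly to explicitly time-dependent formal symmetries and generalized symmetries, with no polynomiality assumption at all. The first gap is the polynomiality claim itself. The fact that the symmetry condition $D_t(G)=\mathrm{D}_F(G)$ has $t$-independent coefficients does \emph{not} force $G$ to be polynomial in $t$: writing the condition as $\partial G/\partial t=[F,G]$, any eigenvector of $[F,\cdot\,]$ with nonzero eigenvalue $\lambda$ yields a symmetry $e^{\lambda t}G_0$. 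This actually happens for autonomous equations: for $u_t=u_{xx}$ the characteristics $e^{\lambda^2t+\lambda x}$ are generalized symmetries with genuinely non-polynomial $t$-dependence. So polynomiality, if it holds for $m\neq 0,1$, must be extracted from the nonlinearity; it is not a consequence of autonomy, and your decomposition $G=\sum_i t^iG_i$ cannot serve as an unconditional starting point.

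The second, more serious gap is the descent step. Granting the decomposition, the chain reads $[F,G_i]=(i+1)G_{i+1}$ (up to sign), and $G_N$ is indeed a time-independent symmetry, hence of order $\le 3$ by Proposition~\ref{prop1}. But if $G_i$ has order $s>3$, the vanishing of the coefficient of $u_{s+2}$ in $[F,G_i]$ (the top one, since the $u_{s+3}$ terms cancel identically) gives only
\[
s\,\frac{\partial G_i}{\partial u_s}\,D_x\!\left(\frac{\partial F}{\partial u_3}\right)
=3\,\frac{\partial F}{\partial u_3}\,D_x\!\left(\frac{\partial G_i}{\partial u_s}\right),
\qquad\text{i.e.}\qquad
\frac{\partial G_i}{\partial u_s}=C\left(amu^{m-1}\right)^{s/3},\quad C=\mathrm{const},
\]
which is precisely the leading-order behaviour a genuine higher-order symmetry would display: there is no contradiction at leading order, so ``an order count on the leading term'' does not bound $\mathrm{ord}\,G_i$. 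Indeed the first descent step is already equivalent to the problem you set out to solve: $G_{N-1}+N\,t\,G_N$ is a bona fide time-dependent symmetry of order $\mathrm{ord}\,G_{N-1}$, so bounding its order cannot be easier than the proposition itself. To exclude $s>3$ one must push through the lower-order coefficients until the non-conservation of $\rho_{-1}=(amu^{m-1})^{-1/3}$ enters; concretely, one notes that $\mathfrak{L}=\mathrm{D}_{G_i}$ has defect $D_t(\mathfrak{L})-[\mathrm{D}_F,\mathfrak{L}]$ of degree $\le 3$ while $\deg\mathfrak{L}=s$, and converts this, via fractional powers, into an explicitly time-independent formal symmetry of order comparable to $s>3$, contradicting Theorem~\ref{1}. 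That is exactly the machinery of \cite{as}, which your proposal neither reproduces nor cites. By contrast, your final part—the explicit classification of symmetries of order $\le 3$ by coefficient matching, yielding $Q_1,Q_2,Q_3$—is routine and agrees with the computation the paper omits; that part is fine.
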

\begin{proof}
Indeed, for $m\neq -2,-1/2,0,1$ the above $\rho_{-1}$ is not a conserved density for (\ref{gkmm}).
In conjunction with Theorem~2 of \cite{as} this is readily seen to imply that (\ref{gkmm}) has
no formal symmetry, even explicitly time-dependent one, of order greater than 3. It is now readily
seen (cf.\ also Theorem~1 in \cite{as}) that (\ref{ee}) has no generalized symmetries, including explicitly time-dependent ones,
of order greater than 3.

Now that we know that all generalized symmetries of (\ref{gkmm}) with
$m\neq -2,-1/2,0,1$ are of order at most 3, we can readily find all of them.
It turns out that for $m\neq -2,-1/2,0,1$  equation~(\ref{gkmm}) has
three symmetries with the characteristics $Q_1=u_x$, $Q_2=u_t$ and $Q_3=(m-1)tu_t+u$.
\end{proof}

Note that there are no conservation laws associated (through the Hamiltonian operator $\mathfrak{D}$)
to the first and third symmetry. The conserved functional associated to the second symmetry through $\mathfrak{D}$
is the energy $\int \int  u^m du dx$.
\begin{vt}\label{2}
If $\rho$ is a density of a local conservation law for a generalized $K(m,m)$ equation, where $m\neq -2,-1/2,0,1$, then it is, up to the addition of a trivial density, a function of $x,t$ and $u$ only.
\end{vt}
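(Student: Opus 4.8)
The plan is to exploit the Hamiltonian structure (\ref{gkmm}) to convert a conservation law into a generalized symmetry and then invoke the order bound of Proposition~\ref{prop2}. Let $\rho$ be a density of a local conservation law for (\ref{gkmm}) and let $\gamma=\delta\rho/\delta u$ be its characteristic (the associated cosymmetry), which satisfies the adjoint linearized equation $D_t\gamma+\mathrm{D}_F^*(\gamma)=0$, see e.g.\ \cite{Olver}. Since our equation is Hamiltonian, $u_t=\mathfrak{D}\delta\mathcal{H}$ with the constant-coefficient, skew-adjoint operator $\mathfrak{D}=aD_x^3+bD_x$, the function $Q=\mathfrak{D}\gamma$ is the characteristic of a generalized symmetry of (\ref{gkmm}). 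Concretely, writing $\mathrm{D}_F=\mathfrak{D}\circ mu^{m-1}$ and using $\mathfrak{D}^*=-\mathfrak{D}$ one checks $\mathrm{D}_F\mathfrak{D}+\mathfrak{D}\mathrm{D}_F^*=0$; combining this with the cosymmetry equation and the fact that $\mathfrak{D}$ commutes with $D_t$ gives $D_tQ=\mathrm{D}_F Q$, and this computation is insensitive to any explicit $t$-dependence of $\rho$.

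The key step is an order count, for which $a\neq 0$ (as is the case for (\ref{gkmm})) is essential. Suppose $\gamma$ has order $\ell\geq 1$, that is $\partial\gamma/\partial u_\ell\neq 0$. Then $\mathfrak{D}\gamma=aD_x^3\gamma+bD_x\gamma$ has order exactly $\ell+3$, with leading coefficient $a\,\partial\gamma/\partial u_\ell\neq 0$; in particular $Q=\mathfrak{D}\gamma$ is a nonzero generalized symmetry of order $\ell+3\geq 4$. This contradicts Proposition~\ref{prop2}, which asserts that every generalized symmetry of (\ref{gkmm}), including the explicitly $t$-dependent ones, has order at most $3$. Hence $\gamma$ has order at most $0$, i.e.\ $\gamma=\gamma(x,t,u)$. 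Note that this single count simultaneously rules out both the case $\mathfrak{D}\gamma\neq 0$ (forbidden order) and the Casimir case $\mathfrak{D}\gamma=0$ (which would force the nonzero leading term to vanish).

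It remains to reconstruct $\rho$ from $\gamma$. Set $h(x,t,u)=\int^u\gamma(x,t,s)\,ds$; then $h$ depends on $x,t,u$ only and $\delta h/\delta u=\partial h/\partial u=\gamma$. Consequently $\delta(\rho-h)/\delta u=0$, so by the standard characterization of the kernel of the variational derivative (the triviality criterion, see e.g.\ \cite{Olver}) the density $\rho-h$ is a total $x$-derivative, hence a trivial density. Therefore $\rho$ equals, up to the addition of a trivial density, the function $h(x,t,u)$, which is precisely the assertion of Theorem~\ref{2}.

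The heart of the argument is the passage from a cosymmetry $\gamma$ to the symmetry $\mathfrak{D}\gamma$ together with the legitimacy of applying Proposition~\ref{prop2} to it. I expect the only delicate points to be: (i) confirming that the correspondence $\gamma\mapsto\mathfrak{D}\gamma$ carries cosymmetries to symmetries even when $\rho$ depends explicitly on $t$, which holds here because $\mathfrak{D}$ has constant coefficients and so commutes with $D_t$; and (ii) ensuring that the leading-coefficient computation is valid, for which $a\neq 0$ is indispensable. Once these are settled the order count does all the work, and no further case analysis in $m$ beyond the excluded values $-2,-1/2,0,1$ (already treated in Proposition~\ref{prop2}) is required.
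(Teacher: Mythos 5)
Your proof is correct and follows essentially the same route as the paper: pass from the density $\rho$ to the cosymmetry $\gamma=\delta\rho/\delta u$, use the Hamiltonian operator $\mathfrak{D}=aD_x^3+bD_x$ to turn it into a symmetry $\mathfrak{D}\gamma$, invoke the order bound of Proposition~\ref{prop2} to force $\gamma=\gamma(x,t,u)$, and reconstruct $\rho=\int\gamma\,du$ up to a trivial density. The only difference is one of detail: you verify explicitly (via $\mathrm{D}_F\mathfrak{D}+\mathfrak{D}\mathrm{D}_F^*=0$ and constancy of the coefficients of $\mathfrak{D}$) the fact that Hamiltonian operators map cosymmetries to symmetries, which the paper simply cites from the literature.
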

\begin{proof}
Let $\rho$ be a density of a local conservation law for a non-integrable case of the $K(m,m)$ equation, i.e., for $m\neq -2,-1/2,0,1$. Then the function $\gamma=\frac{\delta\rho}{\delta u}$ is a cosymmetry for (\ref{gkmm}).
As the $K(m,m)$ equation (\ref{gkmm}) is Hamiltonian with respect to the Hamiltonian operator $\mathfrak{D}=aD_x^3+bD_x$, see (\ref{gkmm}),
and thanks to the fact that Hamiltonian operators map cosymmetries to symmetries, see e.g.\ \cite{blaszak}, we conclude that
$\mathfrak{D}(\gamma)$ is a symmetry of our equation (\ref{gkmm}). But it follows
from Proposition~\ref{prop2} that our equation cannot have
generalized symmetries of order greater than 3, so the order of $\mathfrak{D}(\gamma)$ is less than or equal to $3$. Now suppose that the order of the function $\gamma$ is equal to $K>0$. Then the order of $\mathfrak{D}(\gamma)$ would be equal to $3+K>3$, which is a contradiction, so the order
of any cosymmetry $\gamma$ of (\ref{gkmm}) with $m\neq -2,-1/2,0,1$ must be zero, i.e., it may depend at most on $x,t,u$. To any such cosymmetry $\gamma=\gamma(x,t,u)$ there corresponds a conserved density $\rho=\int \gamma du$ for which $\delta\rho/\delta u=\gamma$ by construction. Quite obviously, this $\rho$ also depends at most on $x,t,u$ and is defined up to the addition of an arbitrary (smooth) function of $x$ and $t$.\looseness=-1
\end{proof}
Knowing from Theorem \ref{2} that up to the addition of a trivial density any conserved density for (\ref{gkmm}) with $m\neq -2,-1/2,0,1$ depends at most on $x,t,u$,
we can readily find all conservation laws of all $K(m,m)$ equations for $m\neq -2,-1/2,0,1$. We omit the straightforward computations and state just the relevant result:
\begin{vt}
The only local conservation laws of the form $D_t(\rho)=D_x(\sigma)$ for the generalized $K(m,m)$ equation (\ref{gkmm}) with $m\neq -2,-1/2,0,1$,
are, modulo the addition of trivial conservation laws, just the linear combinations of the four conservation laws
which for $b\neq 0 $ are given by the formulas
$$
\begin{array}{ll}
\rho_1=\displaystyle\int u^m d u&\sigma_1=\displaystyle\left(m a u_{xx}u^{2m-1}+\frac{a m(m-2)}{2}u_x^2u^{2m-2}+\frac{b}{2}u^{2m}\right)\\[5mm]
\rho_2=u&\sigma_2=aD_x^2(u^m)+bu^m\\[5mm]
\displaystyle\rho_3=u\sin\left(\frac{\sqrt{b}}{\sqrt{a}}x\right)&\sigma_3=aD_x^2(u^m)\sin\left(\frac{\sqrt{b}}{\sqrt{a}}x\right)
-\sqrt{ab}D_x(u^m)\cos\left(\frac{\sqrt{b}}{\sqrt{a}}x\right)\\[5mm]
\displaystyle\rho_4=u\cos\left(\frac{\sqrt{b}}{\sqrt{a}}x\right)&\sigma_4=aD_x^2(u^m)\cos\left(\frac{\sqrt{b}}{\sqrt{a}}x\right)
+\sqrt{ab}D_x(u^m)\sin\left(\frac{\sqrt{b}}{\sqrt{a}}x\right).
\end{array}
$$
If $b=0$, then the conservation law with the density $\rho_3$ is trivial, and the densities $\rho_2$ and $\rho_4$ coalesce.
However, there are two other conservation laws in such a case, namely
$$
\begin{array}{ll}
\rho_5=xu&\sigma_5=aD_x^2(xu^m)-3aD_x(u^m)\\
\rho_6=x^2u&\sigma_6=aD_x^2(x^2u^m)+6au^m-aD_x(xu^m),
\end{array}
$$
i.e., for $b=0$ equation (\ref{gkmm}) with $m\neq -2,-1/2,0,1$ also has,
up to the addition of trivial conservation laws, just four conservation laws
with the densities $\rho_1,\rho_2,\rho_5,\rho_6$ and the fluxes $\sigma_1,\sigma_2,\sigma_5,\sigma_6$.
\end{vt}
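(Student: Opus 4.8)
The plan is to exploit the reduction already furnished by Theorem~\ref{2}: modulo a trivial density, every conserved density $\rho$ of (\ref{gkmm}) is a function of $x,t,u$ alone, and its variational derivative $\gamma=\delta\rho/\delta u=\partial\rho/\partial u$ is a cosymmetry of order zero. Conversely, as noted at the end of the proof of Theorem~\ref{2}, every cosymmetry $\gamma=\gamma(x,t,u)$ integrates to a conserved density $\rho=\int\gamma\,du$, defined up to an arbitrary function of $x,t$. Hence the entire problem reduces to determining all order-zero cosymmetries $\gamma(x,t,u)$, after which the densities $\rho$ and the fluxes $\sigma$ are recovered by quadrature.

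To write down the determining equation for $\gamma$ I would use the Hamiltonian structure. Since the right-hand side is $F=\mathfrak{D}(u^m)$ with Fr\'echet derivative $\mathrm{D}_F=\mathfrak{D}\circ(mu^{m-1})$, its formal adjoint is $\mathrm{D}_F^*=-mu^{m-1}\mathfrak{D}$, so the cosymmetry (adjoint-symmetry) condition $D_t(\gamma)+\mathrm{D}_F^*(\gamma)=0$ becomes
\[
\gamma_t+\gamma_u\,F=mu^{m-1}\,\mathfrak{D}(\gamma),
\]
where subscripts denote partial derivatives. I would then expand both sides as polynomials in the jet coordinates $u_1,u_2,u_3$ (with coefficients depending on $x,t,u$), using the explicit expansions of $\mathfrak{D}(u^m)$ and of $\mathfrak{D}(\gamma)=a\,D_x^3\gamma+b\,D_x\gamma$ for $\gamma=\gamma(x,t,u)$, and match coefficients of the independent monomials.

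This coefficient matching is the computational heart of the argument. The coefficient of $u_2$ forces $\gamma_{xu}=0$, and the coefficient of $u_1u_2$ forces $u\,\gamma_{uu}=(m-1)\gamma_u$; together these integrate to $\gamma=\phi(t)u^m+\beta(x,t)$, after which the coefficients of $u_3$, $u_1^3$, $u_1^2$ and $u_1$ are automatically consistent. The remaining, jet-free part of the equation reads $\phi'(t)u^m+\beta_t=mu^{m-1}\big(a\beta_{xxx}+b\beta_x\big)$. Here lies the decisive step: since $m\neq 0,1$, the three functions $u^m,\ u^{m-1},\ 1$ are linearly independent, so this single relation splits into $\phi'=0$ (thus $\phi$ is the constant energy coefficient), $\beta_t=0$ (thus $\beta=\beta(x)$), and the linear ODE $a\beta'''+b\beta'=0$. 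This is precisely the statement that, modulo the energy cosymmetry $u^m$ (which $\mathfrak{D}$ maps to the symmetry $Q_2$), every cosymmetry is a time-independent Casimir $\beta(x)$ of $\mathfrak{D}$, in agreement with the remark preceding the theorem.

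It then remains to solve $a\beta'''+b\beta'=0$ and assemble the result. For $b\neq0$ the solution space is spanned by $1,\ \cos\big((\sqrt{b}/\sqrt{a})x\big),\ \sin\big((\sqrt{b}/\sqrt{a})x\big)$, so $\gamma$ ranges over $\mathrm{span}\{u^m,1,\cos,\sin\}$; integrating in $u$ yields the four densities $\rho_1=\int u^m\,du$, $\rho_2=u$, $\rho_3$, $\rho_4$, and the fluxes $\sigma_i$ follow by integrating $D_t(\rho_i)$ in $x$ (routine). For $b=0$ the ODE degenerates to $\beta'''=0$, whose solutions $1,x,x^2$ give $\rho_2=u$, $\rho_5=xu$, $\rho_6=x^2u$ alongside $\rho_1$; the same limit shows that $\rho_3$ becomes trivial and $\rho_4$ coalesces with $\rho_2$. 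The main obstacle I anticipate is organizational rather than conceptual: carrying out the jet expansion and coefficient matching without error, and in particular recognizing that the linear independence of the powers $u^m,u^{m-1},1$ --- which is exactly where the hypothesis $m\neq0,1$ enters --- is what forces time-independence and decouples the problem into the elementary ODE for $\beta$.
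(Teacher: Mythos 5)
Your proposal is correct and follows exactly the route the paper takes: it uses Theorem~\ref{2} to reduce to densities (equivalently, cosymmetries) depending only on $x,t,u$, and then carries out the determining-equation computation that the paper explicitly omits as ``straightforward.'' Your jet-coefficient matching, the splitting via linear independence of $u^m$, $u^{m-1}$, $1$ (where $m\neq 0,1$ enters), and the resulting ODE $a\beta'''+b\beta'=0$ correctly reproduce the stated lists for both $b\neq 0$ and $b=0$.
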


Note that if $a$ and $b$ have different signs then sines and cosines of a complex variable appear in the formulas for $\rho_3,\rho_4,\sigma_3$ and $\sigma_4$. In this case it is convenient to divide $\rho_3$ by the imaginary unit $\mathrm{i}$ and use the following {\em real} densities and fluxes instead of the above $\rho_3,\rho_4,\sigma_3$ and $\sigma_4$:
$$
\begin{array}{ll}
\tilde{\rho}_3=\displaystyle cu\sinh\left(\frac{\sqrt{|b|}}{\sqrt{|a|}}x\right)&
\tilde{\sigma}_3=\displaystyle caD_x^2(u^m)\sinh\left(\frac{\sqrt{|b|}}{\sqrt{|a|}}x\right)
-\sqrt{|ab|}D_x(u^m)\cosh\left(\frac{\sqrt{|b|}}{\sqrt{|a|}}x\right)\\[5mm]
\tilde{\rho}_4=\displaystyle u\cosh\left(\frac{\sqrt{|b|}}{\sqrt{|a|}}x\right)&
\tilde{\sigma}_4=\displaystyle aD_x^2(u^m)\cosh\left(\frac{\sqrt{|b|}}{\sqrt{|a|}}x\right)
-c \sqrt{|ab|}D_x(u^m)\sinh\left(\frac{\sqrt{|b|}}{\sqrt{|a|}}x\right),
\end{array}
$$
where $c=1$ if $a>0$ and $b<0$, and $c=-1$ if $a<0$ and $b>0$.

The conserved functional corresponding to the first conserved density is the energy, i.e., the integral of motion associated with the invariance under the time shifts. If $m=2k-1$ where $k\in\mathbb{Z}\setminus \left\{0,1\right\}$, then fact that the quantity $\int u^{m+1} dx$ is conserved immediately implies the following property of the solutions of the corresponding $K(m,m)$ equation:  if a solution $u(x,t)$ of (\ref{gkmm}) belongs to the space $L^{2k}(\mathbb{R})$ as a function of $x$, i.e., $\int_\mathbb{R} |u|^{2k} dx <\infty$, at the time $t=t_0$ then $u(x,t)\in L^{2k}(\mathbb{R})$ for all $t\geq t_0$.

The remaining conserved functionals are Casimir functionals  corresponding to our Hamiltonian operator $\mathfrak{D}$,
so finding a suitable physical interpretation thereof is rather unlikely.

As a final remark, note that it would be interesting to apply our method for proving nonexistence of higher conservation laws using
existence of a Hamiltonian operator to other nonintegrable systems.

\section*{Acknowledgements}
The author thanks Dr. A. Sergyeyev for stimulating discussions. This research was supported by the Silesian university in Opava under the student grant SGS/11/2012, by the fellowship
from the Moravian--Silesian region, and by the the Ministry of Education, Youth and Sport of the Czech Republic
through  institutional support for the development of research organization (I\v C47813059).\looseness=-1

\end{document}